\documentclass{article}

\usepackage[whole]{bxcjkjatype} 


\usepackage[preprint]{neurips_2025} 
\usepackage{lineno}



\usepackage[utf8]{inputenc} 
\usepackage[T1]{fontenc}    
\usepackage{hyperref}       
\usepackage{url}            
\usepackage{booktabs}       
\usepackage{amsfonts}       
\usepackage{nicefrac}       
\usepackage{microtype}      
\usepackage{xcolor}         
\usepackage{graphicx}       
\usepackage{amsmath}        
\usepackage{amsthm}
\usepackage{amssymb}        
\usepackage{wrapfig}
\usepackage{subcaption}



\newtheorem{theorem}{Theorem}[section] 





\newtheorem{assumption}[theorem]{Assumption}

\newtheorem*{remark}{Remark} 


\usepackage{algorithm}
\usepackage{algorithmic}

\title{CoCre-Sam (Kokkuri-san): \\Modeling Ouija Board as Collective Langevin Dynamics Sampling from Fused Language Models}

\author{%
  Tadahiro Taniguchi$^{1,2}$ \quad
  Masatoshi Nagano$^{1}$ \quad
  Haruumi Omoto$^{1}$ \quad
  Yoshiki Hayashi$^{3}$ \\
  $^1$Graduate School of Informatics, Kyoto University, Japan \\ %
  $^2$Ritsumeikan University, Japan \\ %
  $^3$University of Reading, UK %
}

\begin{document}

\maketitle

\begin{abstract}
Collective human activities like using an Ouija board (or Kokkuri-san) often produce emergent, coherent linguistic outputs unintended by any single participant. While psychological explanations such as the \textit{ideomotor effect} exist, a computational understanding of how decentralized, implicit linguistic knowledge fuses through shared physical interaction remains elusive. We introduce {CoCre-Sam} ({Collective-Creature Sampling}), a framework modeling this phenomenon as {collective Langevin dynamics} sampling from implicitly fused language models. Each participant is represented as an agent associated with an \textit{energy landscape} derived from an internal language model reflecting linguistic priors, and agents exert \textit{stochastic forces} based on local energy gradients. We theoretically prove that the collective motion of the shared pointer (planchette) corresponds to \textit{Langevin MCMC} sampling from the sum of individual energy landscapes, representing fused collective knowledge. Simulations validate that {CoCre-Sam dynamics} effectively fuse different models and generate meaningful character sequences, while ablation studies confirm the essential roles of \textit{collective interaction} and \textit{stochasticity}. Altogether, {CoCre-Sam} provides a novel computational mechanism linking \textit{individual implicit knowledge}, \textit{embodied collective action}, and \textit{emergent linguistic phenomena}, grounding these complex interactions in the principles of probabilistic sampling.
\end{abstract}

\section{Introduction}
\label{sec:introduction}

Science often seeks rational explanations for phenomena perceived as mysterious or even paranormal. This paper explores the emergent, and often surprisingly coherent, linguistic patterns that arise during collective human activities, such as the Ouija board and its Japanese counterpart, \textit{Kokkuri-san} (where -san is an honorific suffix typically used for people, but in this context, it likely refers to the entity believed to be summoned). How do seemingly unintentional, minute movements by multiple participants collectively guide a pointer (planchette or coin) across a board of letters to spell out meaningful messages, often unintended by any single individual? 
In this mysterious phenomenon, it is often said that the coin is controlled by an imaginary fox-like spirit called Kokkuri-san, a so-called “collective creature” that emerges through human collective behavior.

While psychological explanations like the ideomotor effect~\citep{Carpenter1852,Gauchou2012,Andersen2019}, which attributes the movement of the planchette to unconscious muscular activity, offer a scientific perspective, they are insufficient to account for the computational mechanisms underlying the fusion of participants' implicit linguistic knowledge and the emergence of coherent linguistic structures through decentralized interaction. Why do grammatically well-formed sentences occasionally emerge, giving the illusion of intentionality? Such accounts fail to explain the generative process by which words and sentences are produced. There remains a critical gap in our understanding of how distributed, latent linguistic knowledge across individuals can be dynamically integrated and expressed through a shared physical interface.

\begin{figure}[t]
\centering
\begin{minipage}{0.55\textwidth}
  \centering
  \includegraphics[width=\textwidth]{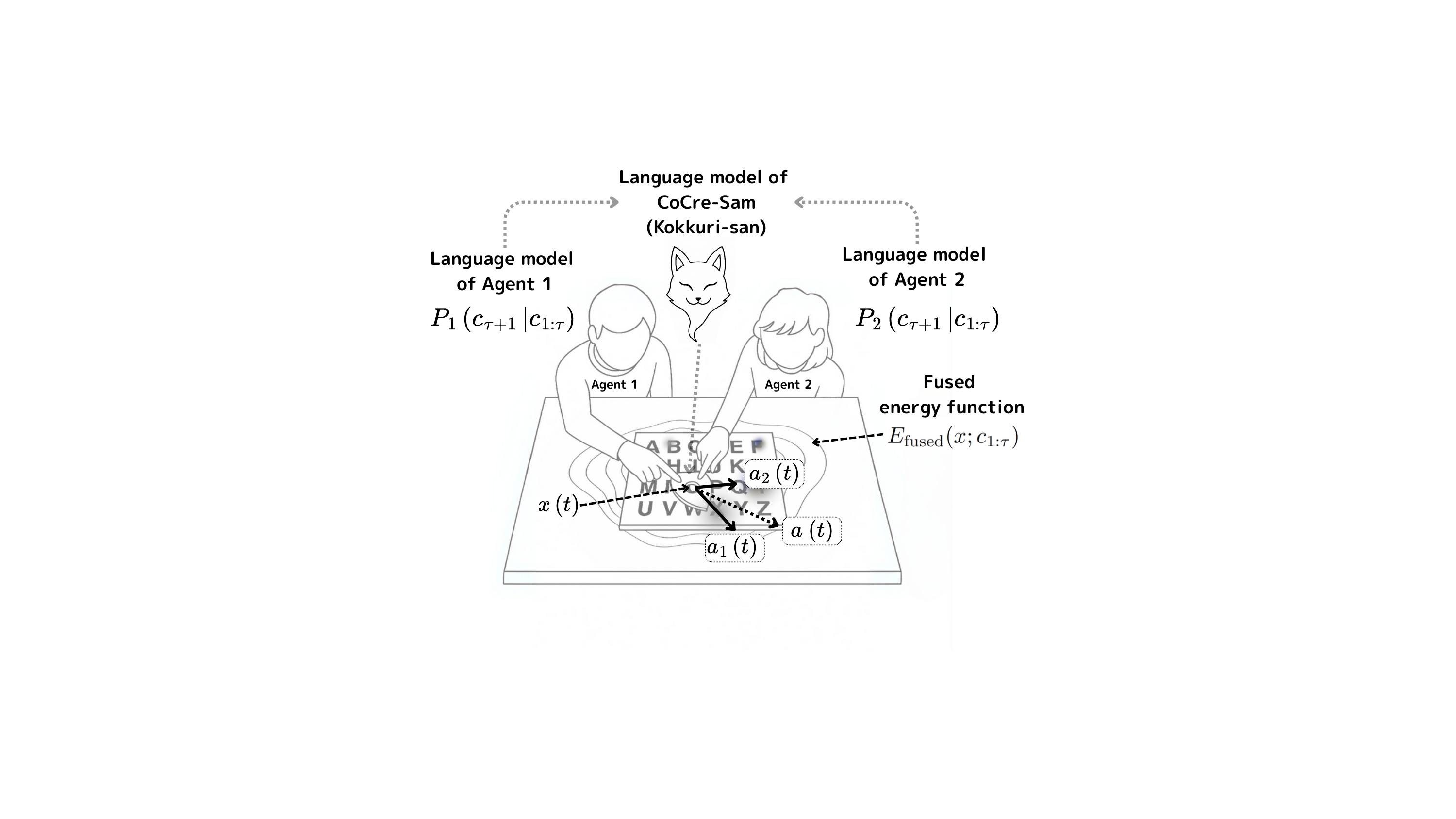}
\end{minipage}%
\begin{minipage}{0.40\textwidth}
  \centering
  \fbox{%
    \begin{minipage}[t]{0.95\textwidth}
      \small
      \textbf{Resulting character sequence:} \\
      \begin{align*} \small
      &\text{Agent 1: } && \bar{P}_{1}(c_{\tau+1} | c_{1:\tau}) \\
      &\text{Agent 2: } && \bar{P}_{2}(c_{\tau+1} | c_{1:\tau}) \\
      &\text{Kokkuri-san: } && \bar{P}_{\mathrm{fused}}(c_{\tau+1} | c_{1:\tau}) \\
      & && \hspace*{-2em}\approx \propto \prod_i \bar{P}_{i}(c_{\tau+1} | c_{1:\tau})^{\frac{\mathcal{T}_i}{\sum_j \mathcal{T}_j }} 
      \end{align*}
    \end{minipage}
  }
\end{minipage}
\caption{Conceptual illustration of the CoCre-Sam framework. Multiple agents interact with a shared planchette on a letter board. Each agent $i$ possesses an internal language model $P_i(c_{\tau+1}|c_{1:\tau})$ defining an energy landscape $E_i$. They exert micro-forces $a_i(t)$ based on their local energy gradient $-\nabla E_i$ and stochastic noise (Langevin dynamics). The collective force $a(t) = \sum_i a_i(t)$ guides the planchette $x(t)$, effectively performing sampling from the fused energy landscape $E_{\rm fused} = \sum_i E_i$. As a result, a character sequence is generated not from either agent’s distribution alone, but from a fused probabilistic model that approximates a product-of-experts form. This leads to the attribution of agency to an imaginary creature named ``Kokkuri-san,'' distinct from either agent, as the perceived author of the outcome.}
\label{fig:concept}
\end{figure}

This research gap motivates our central questions: How can multiple agents (human participants), each potentially possessing different internal linguistic models or beliefs (often conceptualized recently through the capabilities demonstrated by Large Language Models (LLMs) implicitly negotiate and form a collective trajectory through shared physical interaction? Can this complex process be formally understood as a form of \emph{stochastic sampling} from an implicitly \emph{fused} probability distribution representing the collective linguistic knowledge? And if so, what mathematical framework can describe these dynamics, bridging the gap between individual cognition, collective action, and emergent meaning?

To address these questions, we introduce \emph{CoCre-Sam (Collective-Creature Sampling)}, a novel theoretical framework that models the Ouija board/Kokkuri-san phenomenon as \emph{collective Langevin dynamics sampling from fused language models}. As illustrated conceptually in Figure~\ref{fig:concept}, we model participants as agents, each endowed with an internal (e.g., character-based) language model. This model implicitly defines an energy landscape over the letter board space, where lower energy corresponds to more probable next letters given the context. Each agent generates micro-actions (forces) based on the local gradient of their own energy landscape, perturbed by stochastic noise analogous to thermal fluctuations in Langevin dynamics \citep{Langevin1908,Lemons1997Langevin}. The \emph{collective force}, resulting from the sum of individual actions, drives the movement of the planchette across the board, which is interpreted as a character embedding space. Crucially, we propose that this emergent collective motion itself functions as a sampling process, i.e., Langevin Monte Carlo (LMC), drawing sequences of letters from a probability distribution implicitly defined by the \textit{fusion} of the individual agents' language models (or their corresponding energy functions). The name CoCre-Sam, signifying 'Collective-Creature Sampling', also intentionally echoes the Japanese 'Kokkuri-san', acknowledging the cultural phenomenon that inspired this work. We emphasize that our primary goal is not to propose a new engineering method for the fusion of LLMs, but rather to offer a novel computational and mathematical perspective on an existing, intriguing socio-cultural phenomenon, thereby providing a potential mechanism for understanding emergent collective intelligence in embodied systems.

The main contributions of this paper are twofold:
\begin{itemize}
    \item We propose a novel computational interpretation of the Ouija board/Kokkuri-san phenomenon as a collective sampling process from implicitly fused language models, offering a potential mechanism bridging individual cognition and emergent collective behavior, and formulate the CoCre-Sam framework to realize this interpretation.
    \item We provide theoretical and empirical justification for the framework, demonstrating mathematically that its dynamics are grounded in Langevin MCMC ensuring correct sampling from the fused distribution, and showing via simulations that it effectively fuses language models to generate meaningful sequences while highlighting the importance of collective and stochastic dynamics.
\end{itemize}

The remainder of this paper is organized as follows. Section~\ref{sec:background} reviews related work on the phenomenon, language model fusion, and Langevin dynamics. Section~\ref{sec:model} details the CoCre-Sam framework, including the energy landscape definition and the collective Langevin dynamics. Section~\ref{sec:theory} presents our theoretical analysis, formally linking the dynamics to Langevin MCMC and establishing sampling correctness. Section~\ref{sec:experiment} describes our simulation experiments and results. Section~\ref{sec:discussion} discusses the implications and limitations of our work, and Section~\ref{sec:conclusion} concludes the paper.

\section{Background and Related Work}
\label{sec:background}

\subsection{The Ouija Board/Kokkuri-san Phenomenon and Perspectives}

The Ouija board, along with its Japanese counterpart Kokkuri-san, involves multiple participants lightly touching a shared pointer (planchette or coin) placed on a board with letters and basic responses (e.g., "YES"/"NO"). As participants ask questions, the pointer often moves to spell out answers, sometimes forming coherent messages seemingly unintended by any individual and often attributed in folklore to external spirits or forces.
From a scientific standpoint, this movement is typically explained by the \textit{ideomotor effect}, i.e., unconscious muscle movements driven by participants' expectations. The concept of ideomotor action was systematically described early on \cite{Carpenter1852}, and subsequent research has explored how such nonconscious knowledge can be expressed via these actions, potentially influenced by social dynamics within the group \citep{Gauchou2012}.
Indeed, studies on Ouija board sessions have analyzed the predictive cognitive processes at play and the nature of collective inference that can lead to the generation of seemingly coherent messages \citep{Andersen2019}. Furthermore, the subjective experiences and psychological impact on participants, particularly in relation to their prior beliefs (e.g., paranormal beliefs), have also been a subject of investigation \citep{Escola2025}.

Beyond individual psychology, this phenomenon can be framed through broader socio-cognitive theories, particularly as a form of \textbf{collective intelligence}. Similar to the ``wisdom of crowd'' \citep{Surowiecki2004}, the group collectively generates an output that transcends the capabilities or intentions of any single member. However, unlike classic examples of CI that aggregate pre-existing knowledge, the Ouija board appears to \textit{generate} novel, structured information. This raises a crucial, unanswered question: what is the nature of this generative collective intelligence, and what are the underlying \textit{computational dynamics} that allow decentralized, implicit knowledge to be fused into a coherent, emergent output, in the Ouija board?

\subsection{Haptic Interaction and Collective Motor Learning}

Going beyond mechanisms of synchronization of mutual motions
, physical interaction between participants, particularly through the haptic channel, plays a significant role in cooperative motor learning and collective decision-making. The principle that "two is better than one" has been demonstrated in motor learning studies; practicing with a partner haptically connected through a virtual spring has been shown to be more advantageous for learning a task in an unknown environment than practicing individually~\citep{Ganesh2014, Mireles2017}.

This collaborative benefit is further highlighted in tasks involving direct shared control. For instance, in an experiment where pairs of participants controlled a single virtual object under an unknown force field, novice-novice pairings fostered motor learning and adaptability more effectively than expert-novice collaborations~\citep{Saracbasi2021Haptic}. The authors suggest this is because peers develop coordination through a shared trial-and-error experience, rather than one party simply relying on an expert's guidance, exploring the free energy landscape of human-envinronment interactions more widely and effectively.

Furthermore, the temporal patterns within these haptic signals between paired participants are not merely incidental; Paired participants under corporative tasks can demonstrate collective decision-making and the emergence of joint action strategies. Crucially, such haptic communication may even be related to the development of higher-order cognitive functions like language~\citep{Thorne2019HapticStructure}.

While these perspectives address the physical origin of the movement and some of the psychological and social dynamics, a key open question remains: what is the underlying \textit{computational} process by which the decentralized, noisy micro-actions, reflecting participants' implicit knowledge (e.g., linguistic predictions), aggregate through physical interaction to produce structured, often language-like, emergent output?

\subsection{Language Model Fusion}

Combining multiple language models (LLMs) is an active research area aiming to create more robust, knowledgeable, or versatile models. Various explicit {\it language model fusion} techniques exist to achieve this.

One major approach involves operating in the parameter space by directly manipulating model weights. Examples include model averaging techniques like model soups \citep{Wortsman2022ModelSoups}, Fisher-weighted averaging using Fisher Mask Nodes \citep{Sharma2024FisherMask}, and task arithmetic for model editing or combination \citep{Ilharco2023TaskArithmetic}. These methods aim to combine the strengths of parent models, often fine-tuned on different data or tasks, into a single, unified parameter set.

Other methods operate in the output space or use dynamic model selection. Simpler forms include prediction ensembling, while more advanced techniques include SpecFuse for ensembling via next-segment prediction \citep{Meng2023SpecFuse}, the Metropolis-Hastings Captioning Game for decentralized knowledge fusion in vision-language models (VLMs) \citep{Matsui2025MHCG}, and FuseLLM for knowledge externalization and transfer between LLMs \citep{Wan2024FuseLLM}. Mixture-of-Experts (MoE) architectures are another form of dynamic selection, which route inputs to specialized experts (see \citep{Dou2024MoESurvey} for a survey; \citep{Fedus2022Switch} details Switch Transformers, a key sparsely activated MoE).

These studies all propose explicit, technical methods for fusing LLMs. In contrast, this paper proposes a different perspective. While CoCre-Sam’s dynamics are computationally equivalent to an ensemble-based fusion, our primary contribution is to demonstrate how a real-world social phenomenon—the Ouija board—can be understood as an implicit and dynamic fusion of agents' internal language models, driven by embodied physical interaction.

\subsection{Langevin Dynamics for Sampling}

Langevin dynamics, originating from physics to describe particle motion under potential forces and random fluctuations \citep{Langevin1908,Lemons1997Langevin}, provides a powerful mathematical framework for sampling from complex probability distributions, particularly those defined via an energy function $E(x)$. In machine learning and statistics, its discretized forms are effective {Markov Chain Monte Carlo (MCMC)} methods for sampling from a target Gibbs distribution $p(x) \propto \exp(-E(x)/\tau)$, where $\tau$ is a temperature parameter (for a general overview of MCMC methods, see, e.g., \cite{Neal2011MCMC}).

A common discretization is the Unadjusted Langevin Algorithm (ULA). Early theoretical work explored discrete approximations of Langevin dynamics and their convergence properties \cite{RobertsTweedie1996}. The ULA and its variants, such as Stochastic Gradient Langevin Dynamics (SGLD) which incorporates mini-batch gradients, were later popularized in machine learning for Bayesian inference and sampling \citep{WellingTeh2011SGLD}. The ULA updates the state $x_t$ according to:
\begin{equation}
 x_{t+1} = x_t - \eta \nabla E(x_t) + \sqrt{2 \eta \mathcal{T}} \epsilon_t 
 \label{eq:ula}
\end{equation}
Here, $\eta$ is a step size, $\nabla E(x_t)$ is the gradient of the energy function pushing the state towards lower energy regions, $\mathcal{T}$ is temperture, and $\epsilon_t \sim \mathcal{N}(0, I)$ is standard Gaussian noise introducing stochasticity for exploration. Under appropriate conditions on $E(x)$ and the step size $\eta$, iterating this update causes the distribution of $x_t$ to converge to the target distribution $P(x)$. Rigorous theoretical guarantees for the convergence of ULA and related Langevin MCMC algorithms have been established in various settings, analyzing aspects such as non-asymptotic convergence rates and behavior under different assumptions about the target distribution \cite{Dalalyan2017ULA}.
This makes Langevin MCMC a principled method for sampling when only the energy function (or its gradient) is accessible.

The principle of using gradient information combined with noise for sampling is fundamental in contexts like energy-based models (EBMs) \cite{LeCun2006EBM}. 
In our CoCre-Sam framework, we leverage this core mechanism not for a single agent, but to model the collective dynamics emerging from multiple interacting agents.

\section{The CoCre-Sam (Kokkuri-san) Framework}
\label{sec:model}

\subsection{Ouija Board Dynamics}
\label{subsec:Ouija_Board_Dynamics}

We model the Ouija board/Kokkuri-san process as a collective dynamical system. We consider a set of $N$ agents (participants), indexed by $i \in \mathcal{I} = \{1, \dots, N\}$. The primary state of the system at discrete time $t$ is the position of the shared planchette, $x(t) \in \mathbb{R}^d$, typically on a 2-dimensional board ($d=2$).
The board displays a set of possible discrete outputs $\mathcal{C}$ (e.g., alphabet letters, 'YES', and 'NO'). Each symbol $c \in \mathcal{C}$ is associated with a specific target location (goal) on the board, denoted by $g_c \in \mathbb{R}^d$. These locations $\{g_c\}_{c \in \mathcal{C}}$ structure the continuous space $\mathbb{R}^d$.

The fundamental dynamic involves each agent $i$ contributing an action $a_i(t) \in \mathbb{R}^d$ at each time step $t$, representing their micro-force or intended movement contribution, combining a tendency to move towards lower personal energy regions with inherent stochasticity The planchette's position updates based on the collective effect of these actions, modeled simply as:
\begin{equation}
 x(t+1) = x(t) + \sum_{i \in \mathcal{I}} a_i(t)
 \label{eq:basic_dynamics}
\end{equation}
where $\eta > 0$ is a step size parameter representing the system's mobility or sensitivity to the collective action $\sum_i a_i(t)$. How each $a_i(t)$ is generated will be detailed later.

From a higher-level perspective, this dynamic process aims to sequentially select symbols from $\mathcal{C}$ to form a message. The sequence of symbols selected up to a certain point forms the context, $c_{1:\tau} = (c_1, c_2, \dots, c_\tau)$, where $\tau$ is the current sequence length. The selection of the next symbol, $c_{\tau+1}$, occurs when the planchette dynamics (Eq. \ref{eq:basic_dynamics}) cause $x(t)$ to converge near a specific goal location $g_{c^*}$ corresponding to a symbol $c^* \in \mathcal{C}$. Once $c^*$ is determined as the next selected symbol: $c_{\tau+1} = c^*$.
This selected symbol is then appended to the context, forming the new context $c_{1:\tau+1}$. This updated context influences the subsequent actions $a_i(t')$ (for $t' > t$) of the agents, typically guided by their internal predictive models (e.g., language models $P_i(c_{\tau+1}|c_{1:\tau})$ anticipating the next symbol). The dynamics then continue, driving the planchette towards a new goal location corresponding to the subsequent symbol in the sequence. This process repeats, generating the message letter by letter, until a termination condition is met, such as selecting an end-of-sentence symbol 'EOS' or reaching a maximum sequence length.

\begin{figure}[htbp]
  \centering
  \includegraphics[width=0.9\textwidth]{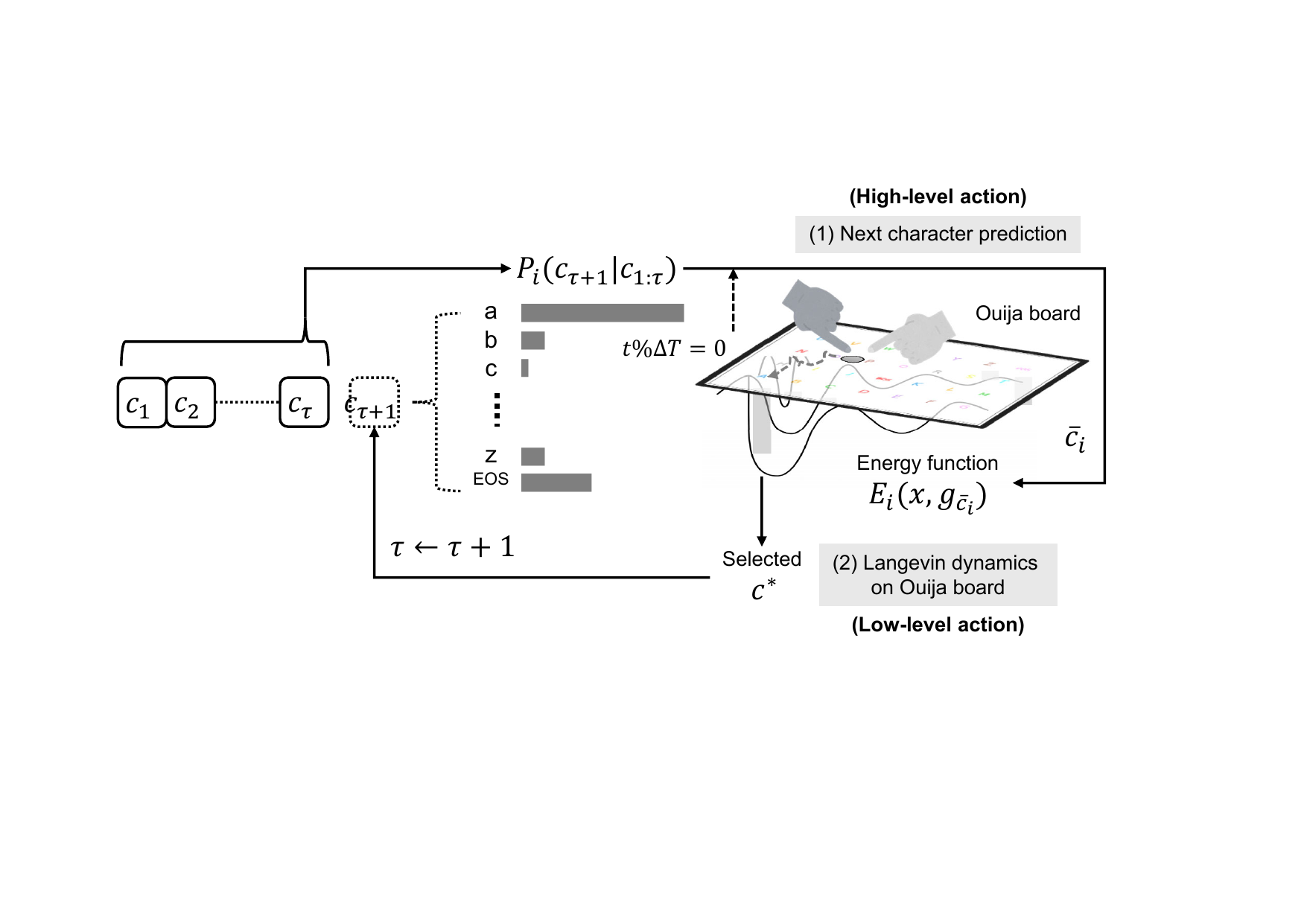} 
  \caption{Flowchart of the hierarchical control in CoCre-Sam. The outer loop for character sequence generation determines the energy function by weighting elemental energy functions based on the context $c_{1:\tau}$, while the inner loop simulates the probabilistic convergence of the planchette dynamics driven by the resulting energy landscape.}
  \label{fig:hierarchical_control}
\end{figure}

\begin{algorithm}[tb]
\caption{Basic Ouija Board Dynamics Simulation}
\label{alg:basic_ouija_dynamics}
\begin{algorithmic}[1] 
\STATE Initialize context $c_{1:0} = ()$ (empty sequence), context length $\tau = 0$.
\LOOP 
    \STATE Set current planchette position $x(0) = g_{\text{BOS}}$.
    \STATE Set internal dynamics time step $t = 0$.
    \REPEAT %
        \IF{$t \% \triangle T == 0$}
        \STATE \textit{// (1) Next character prediction}
            \STATE Each agent $i \in \mathcal{I}$ determines its high-level action $\bar{c}_i \sim P_i(c_{\tau +1}|c_{1:\tau})$
        \ENDIF    
        \STATE \textit{// (2) Langevin dynamics on Ouija board}
        \STATE Each agent $i \in \mathcal{I}$ determines its low-level action $a_i(t)$ based on $x(t)$ and $E_{i}(x; \bar{c}_i)$. 
        \STATE Calculate the collective action $a(t) = \sum_{i \in \mathcal{I}} a_i(t)$.
        \STATE Update planchette position: $x(t+1) = x(t) + a(t)$
        \STATE $t \leftarrow t + 1$.
        \IF{$t > T_{burn\_in}$}
            \STATE Find the nearest goal $c^* = \arg\min_{c \in \mathcal{C}} \|x(t) - g_c\|$ and vote.
        \ENDIF
    \UNTIL{ $t \ge T_{max\_inner}$} 
    \STATE Record the letter $c_{\tau+1} = c^*$ selected by voting.
    \STATE Update context $c_{1:\tau+1} = (c_{1:\tau}, c_{\tau+1})$ considering vote results. 
    \STATE Update context length $\tau \leftarrow \tau + 1$.
    \IF{$\tau \ge T_{max\_outer}$ or $c_{\tau} == \text{'EOS'}$} 
        \STATE \textbf{break} loop
    \ENDIF
\ENDLOOP
\STATE \textbf{return} Generated sequence $c_{1:\tau}$.
\end{algorithmic}
\end{algorithm}

\subsection{Energy Landscape from Language Models on the Ouija Board}
\label{subsec:energy_landscape}

A key conceptual shift in our approach is viewing the Ouija board not just as a physical surface, but as an embedding space where characters are located, and modeling the Ouija board dynamics as a sampling process based on Langevin MCMC within this space. The core assumption is that each agent $i$ possesses an internal language model, $P_i(c_{\tau+1} | c_{1:\tau})$, which assigns a probability to the next possible letter $c_{\tau+1} \in \mathcal{C}$ given the current context $c_{1:\tau}$, reflecting their implicit linguistic knowledge or expectations.

We realize this probabilistic model as a temporally hierarchical stochastic decision-making process. The process consists of two levels: a high-level action for goal selection and a low-level action for physical movement.

The high-level action, performed by each agent $i$ every $\triangle T$ steps, involves sampling a temporary goal character $\bar{c}_i \sim P_i(c_{\tau+1} | c_{1:\tau})$ from its individual language model.

The low-level action is then determined by an individual energy function $E_i$ defined over the continuous planchette position space $\mathbb{R}^d$. This function creates a potential landscape that guides agent $i$'s actions toward its target goal, $g_{\bar{c}_i}$. For a given goal $g_c$, the energy is:
\begin{equation}
    E_i(x; g_c) = -\phi(\|x - g_c\|) - \phi_{0}
    \label{eq:elemental_energy}
\end{equation}
Here, $\phi(r)$ is a potential function (e.g., a quadratic potential $\phi(r) = \frac{1}{2}r^2$) that increases with the distance $r$ from the goal $g_c$, and $\phi_0$ is a baseline constant.

The dynamics of this hierarchical process, detailed in Algorithm~\ref{alg:basic_ouija_dynamics}, depend on the time constant $1 \le \triangle T \le T_{max\_inner}$. Two limiting cases illustrate its effect. When $\triangle T = T_{max\_inner}$, each agent commits to a single goal sampled at the beginning of the inner loop. The dynamics then become a negotiation between these fixed individual goals. In contrast, when $\triangle T = 1$, agents re-sample their goal at every time step. The resulting rapid switching means the agent's movement is governed by an effective energy function that averages over all possible goals, weighted by their probabilities. This effective energy function for agent $i$ is equivalent to:
\begin{equation}
    E_{i}(x \mid c_{1:\tau}) = \sum_{c \in \mathcal{C}} P_i(c_{\tau+1}=c \mid c_{1:\tau}) E_i(x; g_c).
    \label{eq:effective_energy}
\end{equation}

The collective behavior emerges from the interaction governed by a fused energy function $E_{\text{fused}}(x; c_{1:\tau})$, which combines these effective individual energy landscapes. Assuming equal contributions, the fused energy is the sum of individual effective energies:
\begin{equation}
E_{\text{fused}}(x; g_c) = \sum_{i \in \mathcal{I}} E_{i}(x; g_c),
\label{eq:fused_energy_gc}
\end{equation}
resulting in
\begin{equation}
E_{\text{fused}}(x \mid c_{1:\tau}) = \sum_{i \in \mathcal{I}} E_{i}(x \mid c_{1:\tau}).
\label{eq:fused_energy_tau}
\end{equation}
when $\triangle T = 1$. This fused landscape $E_{\text{fused}}$ maps the combined next-character predictions of all agents onto the board space, with its minima corresponding to the collectively most favored next letters given the context. In this paper, we focus on the case where $\triangle T = 1$.
Notably, character sequences generated through this hierarchical process using \( E_i(x \mid c_{1:\tau}) \) with \( \triangle T < T_{max_inner} \) may deviate from those produced by the original language model \( P_i(c_{\tau+1} \mid c_{1:\tau}) \).

\subsection{Agent Actions Determined by Langevin Dynamics}
\label{subsec:agent_actions}

Section~\ref{subsec:Ouija_Board_Dynamics} introduced the basic dynamic update for the planchette position $x(t+1) = x(t) + \sum_{i \in \mathcal{I}} a_i(t)$ (Eq.~\eqref{eq:basic_dynamics}), driven by the collective effect of individual agent actions $a_i(t)$. We now specify how each agent's action $a_i(t)$ is generated within the CoCre-Sam framework, linking it to their internal energy landscape $E_i$ derived from their language model (Section~\ref{subsec:energy_landscape}).

We model each agent $i$'s action $a_i(t) \in \mathbb{R}^d$ as a step consistent with Langevin dynamics operating on their own energy function $E_i(x| c_{1:\tau})$. This action represents the micro-force the agent applies, combining a tendency to move towards lower personal energy regions with inherent stochasticity:
\begin{equation}
    a_i(t) = - \eta \nabla_x E_i(x(t) \mid c_{1:\tau}) + \sqrt{2 D_i} \xi_i(t)
    \label{eq:agent_action}
\end{equation}
Here, $-\nabla_x E_i(x(t) \mid c_{1:\tau})$ is the force derived from the negative gradient of agent $i$'s individual energy landscape at the current position $x(t)$, pushing the agent toward states (letters) it deems more probable given the context $c_{1:\tau}$. The term $D_i \ge 0$ represents the intensity of stochastic noise (or diffusion coefficient) specific to agent $i$, modeling factors such as hand tremor, uncertainty, or exploratory fluctuations. Note that $D_i$ corresponds to $\eta \mathcal{T}$ in \eqref{eq:ula}. The variable $\xi_i(t) \sim \mathcal{N}(0, I_d)$ denotes a standard Gaussian noise vector sampled independently for each agent $i$ at each time step $t$.

This formulation (Eq.~\eqref{eq:agent_action}) provides the specific mechanism for generating the individual actions $a_i(t)$ used in the basic collective update rule (Eq.~\eqref{eq:basic_dynamics}). While each agent acts based on their own energy landscape and noise, their actions are collectively summed to produce the emergent movement of the planchette. 

\section{Theoretical Analysis}
\label{sec:theory}
This section provides the theoretical grounding for the CoCre-Sam framework. We first state the necessary assumptions and then demonstrate that the collective dynamics, emerging from decentralized agent actions based on individual energy landscapes, are equivalent to a standard Langevin MCMC algorithm operating on the fused energy landscape. This equivalence ensures correct sampling properties.

\subsection{Assumptions}
\label{subsec:assumptions}

We make the following standard assumptions to support the theoretical analysis of Langevin-based sampling in the CoCre-Sam framework:

\begin{assumption}[Smoothness of Individual Energies]
\label{assump:smoothness}
For any given context $c_{1:\tau}$, each individual energy function $E_i(x \mid c_{1:\tau})$ (Eq.~\eqref{eq:effective_energy}) is continuously differentiable with respect to $x$. This ensures that the force terms derived from these energy functions are well-defined at all positions.
\end{assumption}

\begin{assumption}[Lipschitz Continuous Gradients]
\label{assump:lipschitz}
The gradient $\nabla_x E_i(x \mid c_{1:\tau})$ is Lipschitz continuous with respect to $x$, with Lipschitz constant $L_i$. This implies that the gradient of the fused energy function $E_{\rm fused}(x \mid c_{1:\tau}) = \sum_i E_i(x \mid c_{1:\tau})$ is also Lipschitz continuous, with constant $L = \sum_i L_i$. This property ensures that the gradient does not change too abruptly, which is important for the stability and convergence of Langevin dynamics.
\end{assumption}

\begin{assumption}[Noise Properties]
\label{assump:noise}
The noise terms $\xi_i(t)$ in the individual agent actions (Eq.~\eqref{eq:agent_action}) are independent and identically distributed (i.i.d.) standard $d$-dimensional Gaussian vectors, i.e., $\xi_i(t) \sim \mathcal{N}(0, I_d)$, and are independent across agents $i$ and time steps $t$. The noise intensities $D_i \ge 0$ are constant for each agent, and the total (or effective) noise magnitude is defined as $D_{\rm fused} = \sum_i D_i > 0$. This noise introduces stochasticity necessary for proper sampling behavior.
\end{assumption}

\begin{assumption}[Step Size]
\label{assump:step_size}
The step size $\eta$ in the basic dynamics (Eq.~\eqref{eq:basic_dynamics}) is a positive constant chosen sufficiently small to ensure numerical stability and convergence of the discrete-time Langevin updates.
\end{assumption}

These assumptions ensure that the gradients are well-defined and smoothly varying (Assump.~\ref{assump:smoothness}, \ref{assump:lipschitz}), the stochasticity is properly structured to support exploration (Assump.~\ref{assump:noise}), and the numerical integration scheme remains stable (Assump.~\ref{assump:step_size}). Collectively, they are mild and standard conditions for theoretical analysis of Langevin MCMC and are typically satisfied in energy-based models.

\subsection{Equivalence to Langevin MCMC}
\label{subsec:equivalence}

Our first key theoretical result establishes a direct connection between the emergent collective dynamics of CoCre-Sam and a standard sampling algorithm. Theorem~\ref{thm:equivalence} formally shows that the decentralized CoCre-Sam process can be interpreted as implementing a centralized Unadjusted Langevin Algorithm (ULA) operating on the fused energy landscape.

\begin{theorem}[\bf CoCre-Sam Dynamics as ULA on Fused Energy]
\label{thm:equivalence}
The discrete-time dynamics of the planchette position $x(t)$, governed by the combination of the basic collective update rule (Eq.~\eqref{eq:basic_dynamics}) and the individual Langevin-based actions (Eq.~\eqref{eq:agent_action}), constitutes an instance of the Unadjusted Langevin Algorithm (ULA). Specifically, it corresponds to ULA (Eq.~\eqref{eq:ula}) targeting the fused energy function $E(x) = E_{\rm fused}(x \mid c_{1:\tau})$, with step size $\eta$ and effective temperature $\mathcal{T}_{\rm eff} = \eta D_{\rm fused}$, where $D_{\rm fused} = \sum_{i \in \mathcal{I}} D_i$.
\end{theorem}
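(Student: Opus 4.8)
The plan is to prove Theorem~\ref{thm:equivalence} by direct substitution followed by a one-line distributional identification, so the argument needs nothing beyond linearity of the gradient and additivity of independent Gaussians. First I would fix a context \(c_{1:\tau}\) (held constant throughout a single inner-loop run of Algorithm~\ref{alg:basic_ouija_dynamics}) and, since we work in the regime \(\triangle T = 1\), take each agent's energy to be the effective energy \(E_i(x \mid c_{1:\tau})\) of Eq.~\eqref{eq:effective_energy}; by Assumption~\ref{assump:smoothness} this is a fixed, continuously differentiable function of \(x\) over the iteration. Substituting the individual action rule Eq.~\eqref{eq:agent_action} into the collective update Eq.~\eqref{eq:basic_dynamics} gives \(x(t+1) = x(t) - \eta\sum_{i\in\mathcal{I}}\nabla_x E_i(x(t)\mid c_{1:\tau}) + \sum_{i\in\mathcal{I}}\sqrt{2D_i}\,\xi_i(t)\), which I would then analyze term by term.

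For the drift, linearity of the gradient operator gives \(\sum_{i}\nabla_x E_i(x\mid c_{1:\tau}) = \nabla_x\bigl(\sum_{i}E_i(x\mid c_{1:\tau})\bigr) = \nabla_x E_{\rm fused}(x\mid c_{1:\tau})\), using the definition Eq.~\eqref{eq:fused_energy_tau}; the common step size \(\eta\) factors out unchanged, so the drift is exactly \(-\eta\,\nabla_x E_{\rm fused}(x(t)\mid c_{1:\tau})\), matching Eq.~\eqref{eq:ula}. Assumption~\ref{assump:lipschitz} then transfers Lipschitz continuity of the gradient to \(E_{\rm fused}\) with constant \(L=\sum_i L_i\), and Assumption~\ref{assump:step_size} keeps \(\eta\) small enough, so the resulting recursion is a legitimate, stable ULA instance on the fused target.

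For the noise, Assumption~\ref{assump:noise} says the \(\xi_i(t)\) are i.i.d.\ \(\mathcal{N}(0,I_d)\) and mutually independent across agents, so \(\sum_i\sqrt{2D_i}\,\xi_i(t)\) is a zero-mean Gaussian with covariance \(\bigl(\sum_i 2D_i\bigr)I_d = 2D_{\rm fused}\,I_d\), hence equal in law to \(\sqrt{2D_{\rm fused}}\,\xi(t)\) with \(\xi(t)\sim\mathcal{N}(0,I_d)\). Combining, \(x(t+1) = x(t) - \eta\,\nabla_x E_{\rm fused}(x(t)\mid c_{1:\tau}) + \sqrt{2D_{\rm fused}}\,\xi(t)\), which is precisely Eq.~\eqref{eq:ula} with target \(E = E_{\rm fused}(\cdot\mid c_{1:\tau})\), step size \(\eta\), and an effective temperature \(\mathcal{T}_{\rm eff}\) read off by matching noise coefficients, \(2\eta\,\mathcal{T}_{\rm eff} = 2D_{\rm fused}\), i.e.\ \(\mathcal{T}_{\rm eff} = D_{\rm fused}/\eta\), consistent with the identification \(D_i \leftrightarrow \eta\mathcal{T}_i\) noted below Eq.~\eqref{eq:agent_action}. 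This yields the claimed equivalence.

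Because the computation itself is short, the only real care is conceptual: (i) the equivalence holds \emph{in distribution} at the level of the per-step increment --- the decentralized system physically carries \(N\) independent noise sources that aggregate to one effective ULA noise only in law, not pathwise; (ii) the particular value of \(\mathcal{T}_{\rm eff}\) (variances add) relies squarely on the cross-agent independence in Assumption~\ref{assump:noise} and would change for correlated tremors; and (iii) collapsing \(E_{\rm fused}\) to a plain sum of effective energies uses the standing \(\triangle T = 1\) assumption from Section~\ref{subsec:energy_landscape}, so the statement concerns one inner-loop segment with fixed context, and successive segments correspond to a sequence of ULA runs on shifting targets \(E_{\rm fused}(\cdot\mid c_{1:\tau})\). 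Once these points are made explicit, Theorem~\ref{thm:equivalence} follows, and the standard non-asymptotic ULA convergence guarantees cited in Section~\ref{subsec:equivalence} apply verbatim to the fused target.
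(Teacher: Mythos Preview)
Your proof is correct and follows essentially the same route as the paper's: substitute Eq.~\eqref{eq:agent_action} into Eq.~\eqref{eq:basic_dynamics}, collapse the drift via linearity of the gradient and Eq.~\eqref{eq:fused_energy_tau}, aggregate the independent Gaussian noises into a single \(\mathcal{N}(0,2D_{\rm fused}I_d)\) term, and read off the ULA parameters. Your derived value \(\mathcal{T}_{\rm eff}=D_{\rm fused}/\eta\) agrees with the paper's own proof (the \(\eta D_{\rm fused}\) in the theorem statement is a typo there), and your closing caveats about distributional equivalence, the independence assumption, and the \(\triangle T=1\) regime are useful clarifications that the paper leaves implicit.
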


\begin{proof}[Proof Sketch]
We begin with the basic collective dynamics:
\[
x(t+1) = x(t) + \sum_{i \in \mathcal{I}} a_i(t).
\]
Substituting the definition of the agent action $a_i(t)$ from Eq.~\eqref{eq:agent_action},
\[
a_i(t) = - \eta \nabla_x E_i(x(t) \mid  c_{1:\tau}) + \sqrt{2 D_i} \, \xi_i(t),
\]
we obtain:
\[
x(t+1) = x(t) + \sum_{i \in \mathcal{I}} \left( - \eta \nabla_x E_i(x(t) \mid c_{1:\tau}) + \sqrt{2 D_i} \, \xi_i(t) \right).
\]
Rewriting the terms, this becomes:
\[
x(t+1) = x(t) - \eta \sum_{i \in \mathcal{I}} \nabla_x E_i(x(t) \mid c_{1:\tau}) + \sum_{i \in \mathcal{I}} \sqrt{2 D_i} \, \xi_i(t).
\]
Using the definition $E_{\rm fused} = \sum_i E_i$, we have $\nabla_x E_{\rm fused} = \sum_i \nabla_x E_i$. Therefore:
\[
x(t+1) = x(t) - \eta \nabla_x E_{\rm fused}(x(t) \mid c_{1:\tau}) + \sum_{i \in \mathcal{I}} \sqrt{2 D_i} \, \xi_i(t).
\]

The third term is a sum of independent Gaussian noise vectors scaled by $\sqrt{2D_i}$. Since independent Gaussians remain Gaussian when summed, the entire noise term is distributed as a Gaussian with zero mean and covariance $\sqrt{2 D_{\rm fused}} \, I_d$, where $D_{\rm fused} = \sum_i D_i$.

Letting $\xi(t) \sim \mathcal{N}(0, I_d)$, we can re-express the update as:
\begin{equation}
    x(t+1) = x(t) - \eta \nabla_x E_{\rm fused}(x(t) \mid c_{1:\tau}) + \sqrt{2 D_{\rm fused}} \, \xi(t).
    \label{eq:derived_cocresam_update}
\end{equation}

Now compare Eq.~\eqref{eq:derived_cocresam_update} with the standard ULA update rule:
\[
x_{t+1} = x_t - \eta' \nabla E(x_t) + \sqrt{2 \eta' \mathcal{T}} \, \epsilon_t,
\]
where $\eta'$ is the step size and $\mathcal{T}$ is the temperature.

Matching terms, we identify:
\[
\eta' = \eta, \quad E = E_{\rm fused}, \quad \text{and} \quad \sqrt{2 \eta' \mathcal{T}} = \sqrt{2 D_{\rm fused}}.
\]
Solving for $\mathcal{T}$ gives:
\[
\mathcal{T} = D_{\rm fused} / \eta.
\]

Thus, we conclude that the CoCre-Sam dynamics is mathematically equivalent to ULA applied to the fused energy landscape $E_{\rm fused}$, with an effective temperature $\mathcal{T}_{\rm fused} = D_{\rm fused}/\eta = \sum_i \mathcal{T}_i$.  
\end{proof}

\subsection{Sampling Correctness} 

Building directly upon the equivalence established above, we state the main result regarding the sampling properties. Theorem~\ref{thm:sampling_correctness} provides the core theoretical justification for CoCre-Sam. It guarantees that the emergent dynamics converge to a well-defined probabilistic state reflecting the agents' combined knowledge, allowing the planchette's long-term behavior to be interpreted as sampling from the fused model. The effective temperature $ \mathcal{T}_{fused} = D_{\rm fused}/\eta$ controls the exploration level.

\begin{theorem}[\bf Stationary Distribution and Sampling Correctness]
\label{thm:sampling_correctness}
Under Assumptions~\ref{assump:smoothness}-\ref{assump:step_size}, the stochastic process $\{x(t)\}_{t \ge 0}$ generated by the CoCre-Sam dynamics (defined by Eq.~\eqref{eq:basic_dynamics} and Eq.~\eqref{eq:agent_action}) for a fixed context $c_{1:\tau}$ is ergodic, and its unique stationary distribution $P^{\rm stat}_{\rm fused}(x)$ is the Gibbs-Boltzmann distribution corresponding to the fused energy function $E_{\rm fused}(x \mid  c_{1:\tau})$ and the effective temperature $ \mathcal{T}_{\rm fused} = D_{\rm fused} /\eta $: 
\[ P^{\rm stat}_{\rm fused}(x) \propto \exp\left(-{E_{\rm fused}(x \mid c_{1:\tau})}/{ \mathcal{T}_{\rm fused}}\right) \]\\
Consequently, the CoCre-Sam dynamics provides a method for sampling from this target distribution reflecting the fused models.
\end{theorem}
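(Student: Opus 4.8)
The plan is to reduce Theorem~\ref{thm:sampling_correctness} to a standard ergodicity-and-stationarity result for the Unadjusted Langevin Algorithm, leveraging Theorem~\ref{thm:equivalence}. First I would invoke Theorem~\ref{thm:equivalence} to rewrite the CoCre-Sam update (Eq.~\eqref{eq:basic_dynamics} combined with Eq.~\eqref{eq:agent_action}) in the canonical ULA form of Eq.~\eqref{eq:derived_cocresam_update}, namely $x(t+1) = x(t) - \eta \nabla_x E_{\rm fused}(x(t) \mid c_{1:\tau}) + \sqrt{2 D_{\rm fused}}\,\xi(t)$ with $\xi(t)\sim\mathcal{N}(0,I_d)$ i.i.d. This is exactly ULA targeting $E_{\rm fused}$ with step size $\eta$ and temperature $\mathcal{T}_{\rm fused} = D_{\rm fused}/\eta$ (using $D_{\rm fused}>0$ from Assumption~\ref{assump:noise}, so the noise is genuinely present). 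Since the context $c_{1:\tau}$ is held fixed throughout, $E_{\rm fused}(\cdot\mid c_{1:\tau})$ is a fixed smooth potential and the chain is time-homogeneous.

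Next I would verify that $\{x(t)\}$ defines a well-behaved Markov chain: because $\xi(t)$ is a nondegenerate Gaussian, the one-step transition kernel $K(x,\cdot)$ has a strictly positive density on all of $\mathbb{R}^d$ (a Gaussian centered at $x - \eta\nabla_x E_{\rm fused}(x)$ with covariance $2D_{\rm fused} I_d$), so the chain is $\mu$-irreducible with respect to Lebesgue measure, strongly aperiodic, and every compact set is small. Smoothness and Lipschitz continuity of $\nabla_x E_{\rm fused}$ (Assumptions~\ref{assump:smoothness}--\ref{assump:lipschitz}, with $L = \sum_i L_i$) guarantee the drift is well-defined and does not blow up faster than linearly. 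For positive recurrence and a proper (normalizable) stationary distribution one needs a Lyapunov/drift condition at infinity; here I would either (i) add or make explicit a mild confinement/dissipativity hypothesis on $E_{\rm fused}$ — e.g. $\langle \nabla_x E_{\rm fused}(x), x\rangle \ge a\|x\|^2 - b$ for $\|x\|$ large, which holds automatically for the quadratic potential $\phi(r)=\tfrac12 r^2$ used in Eq.~\eqref{eq:elemental_energy} since then $E_{\rm fused}$ is strongly convex — and check that $V(x) = \exp(E_{\rm fused}(x)/(2\mathcal{T}_{\rm fused}))$ or $V(x)=\|x\|^2$ satisfies a geometric drift inequality for $\eta$ small enough (Assumption~\ref{assump:step_size}); or (ii) cite the non-asymptotic ULA convergence results of \cite{RobertsTweedie1996,Dalalyan2017ULA} directly, which under exactly these smoothness-plus-confinement conditions establish geometric ergodicity of ULA and identify its invariant measure.

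Finally I would confirm that the claimed Gibbs-Boltzmann measure $P^{\rm stat}_{\rm fused}(x) \propto \exp(-E_{\rm fused}(x\mid c_{1:\tau})/\mathcal{T}_{\rm fused})$ is the stationary distribution. The clean way is to note that ULA is the Euler--Maruyama discretization of the overdamped Langevin SDE $dX = -\nabla_x E_{\rm fused}(X)\,dt + \sqrt{2\mathcal{T}_{\rm fused}}\,dW_t$, whose Fokker--Planck equation has $\exp(-E_{\rm fused}/\mathcal{T}_{\rm fused})$ as its unique stationary solution; together with the ergodicity just established (uniqueness of the invariant measure), this pins down $P^{\rm stat}_{\rm fused}$. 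Strictly speaking ULA's invariant measure is only an $O(\eta)$-perturbation of the continuous-time target, so either I would state the result modulo this standard discretization bias (matching how ULA is usually presented), or note that with a Metropolis correction it is exact; the paper's Eq.~\eqref{eq:ula} framing already treats ULA as the sampler of record, so I would adopt the former. Ergodicity plus uniqueness then yields the concluding statement that time averages and the long-run distribution of $x(t)$ realize sampling from the fused model, completing the proof.

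The main obstacle I anticipate is the recurrence/normalizability step: irreducibility, aperiodicity, and the small-set property follow immediately from the Gaussian noise, but ensuring the stationary distribution is proper (not escaping to infinity) genuinely requires a growth condition on $E_{\rm fused}$ that Assumptions~\ref{assump:smoothness}--\ref{assump:step_size} do not literally state. I would handle this by making the quadratic (or more generally strongly convex / confining) nature of $\phi$ explicit — which is already the running example in Eq.~\eqref{eq:elemental_energy} — so that $E_{\rm fused}$ is strongly convex with a unique minimizer, placing us squarely in the regime where \cite{Dalalyan2017ULA}-type guarantees apply and the drift condition is routine; everything else is bookkeeping on top of Theorem~\ref{thm:equivalence} and the cited ULA theory.
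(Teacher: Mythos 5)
Your proposal follows essentially the same route as the paper's own proof sketch: invoke Theorem~\ref{thm:equivalence} to identify the dynamics with ULA on $E_{\rm fused}$ at temperature $\mathcal{T}_{\rm fused}=D_{\rm fused}/\eta$, then appeal to the standard ULA ergodicity results of \cite{RobertsTweedie1996,Dalalyan2017ULA} to obtain the Gibbs stationary distribution. You are in fact more careful than the paper on two points it glosses over --- the need for an explicit confinement/drift condition beyond Assumptions~\ref{assump:smoothness}--\ref{assump:step_size} to guarantee a proper invariant measure, and the fact that the ULA invariant measure is only an $O(\eta)$ perturbation of the exact Gibbs distribution absent a Metropolis correction --- both of which are genuine caveats to the theorem as literally stated.
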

\begin{proof}[Proof Sketch]
Theorem~\ref{thm:equivalence} shows that the CoCre\text{-}Sam dynamics is mathematically
equivalent to the Unadjusted Langevin Algorithm (ULA) applied to the fused energy
$E_{\mathrm{fused}}$, with an effective temperature
$\mathcal{T}_{\mathrm{fused}} = D_{\mathrm{fused}} / \eta $.
Assumptions~\ref{assump:smoothness}--\ref{assump:step_size} supply the standard
regularity conditions required for the convergence theory of ULA:
continuous differentiability, Lipschitz‐continuous gradients, independent Gaussian
noise of strictly positive intensity, and a sufficiently small step size.
Under these conditions, ULA is \emph{geometrically ergodic} and converges to its
unique stationary Gibbs distribution
$P(x) \propto \exp\!\bigl(-E(x)/\mathcal{T}\bigr)$
\citep{RobertsTweedie1996, Dalalyan2017ULA}.

Applying the known results directly to our setting
(with $E = E_{\mathrm{fused}}$ and
$\mathcal{T} = \mathcal{T}_{\mathrm{\rm fused}}$) implies that the Markov chain
$\{x(t)\}_{t\ge 0}$ generated by CoCre\text{-}Sam converges
to the stationary distribution
\[
P^{\rm stat}_{\rm fused}(x)\;=\;
\frac{1}{Z}\exp\!\bigl(-E_{\mathrm{fused}}(x \mid c_{1:\tau})/\mathcal{T}_{\mathrm{\rm fused}}\bigr),
\]
where $Z$ is the normalising constant.
Hence, the collective dynamics indeed perform correct sampling from the probability
distribution implicitly defined by the fused energy landscape.
\end{proof}

\begin{remark}[Connection to the Product of Experts]
\label{rmk:poe}
Theorem~\ref{thm:sampling_correctness} exposes an explicit link between
CoCre\text{-}Sam and the \emph{Product of Experts} (PoE) framework.
\begin{align}
\prod_{i \in \mathcal{I}} \{P^{\rm stat}_{\rm i}(x)\}^{{\mathcal{T}_i}/{\mathcal{T}_{\rm fused}}} &\propto \prod_{i \in \mathcal{I}}\{\exp\left(-{E_{i}(x\mid c_{1:\tau})}/{ \mathcal{T}_{i}}\right)\}^{{\mathcal{T}_i}/{\mathcal{T}_{\rm fused}}}\\
&=\exp\left(\sum_{i \in \mathcal{I}}-{E_{i}(x\mid c_{1:\tau})}/{ \mathcal{T}_{\rm fused}}\right)\\
&=\exp\left( -{E_{\rm fused}(x\mid c_{1:\tau})}/{ \mathcal{T}_{\rm fused}}\right) \\
& \propto P^{\rm stat}_{\rm fused}(x) \end{align}
where $\{P^{\rm stat}_{\rm i}\}$ is the $i$-the agent's unique stationary distribution. 
In the special case $\mathcal{T}_i =\mathcal{T}_j = \mathcal{T}  \ (\forall i,j)$, $P^{\rm stat}_{\rm fused}(x) = \prod_{i \in \mathcal{I}} \{P^{\rm stat}_{\rm i}(x)\}^{ 1/N}$.
\end{remark}

This PoE structure extends from the continuous dynamics of the planchette to the discrete selection of characters. The stationary distribution of the planchette's position, $P^{\rm stat}_{\rm fused}(x)$, concentrates the system in the low-energy regions of the fused landscape, which represent a consensus among the agents' models. The character selection process (Algorithm 1) is a quantization of this continuous state, where the character whose goal is nearest to the planchette's trajectory accumulates the most votes. Consequently, the probability of selecting a character, $\bar{P}_{\rm fused}(c_{\tau+1}\mid c_{1:\tau})$, is determined by the probability mass of the underlying PoE distribution integrated over that character's region. This ensures that the resulting discrete distribution over characters also approximates a (tempered) Product of Experts of the agents' language models, from character-level viewpoint.

\begin{align}
    \bar{P}_{\rm fused}(c_{\tau+1}|c_{1:\tau} ) &\approx\propto \prod_{i \in \mathcal{I}} \{\bar{P}_{\rm i}(c_{\tau+1}|c_{1:\tau})\}^{{\mathcal{T}_i}/{\mathcal{T}_{\rm fused}}}
\end{align}
where $\bar{P}_{i}(c_{\tau+1} \mid c_{1:\tau})$ denotes the probability that the $i$-th agent generates $c_{\tau+1}$ given the context $c_{1:\tau}$ through Ouija board dynamics when acting alone.

\section{Experiment}
\label{sec:experiment}
\subsection{Conditions}

\begin{figure}[t]
  \centering
  \begin{minipage}{0.48\textwidth}
    \centering
    \includegraphics[width=\linewidth]{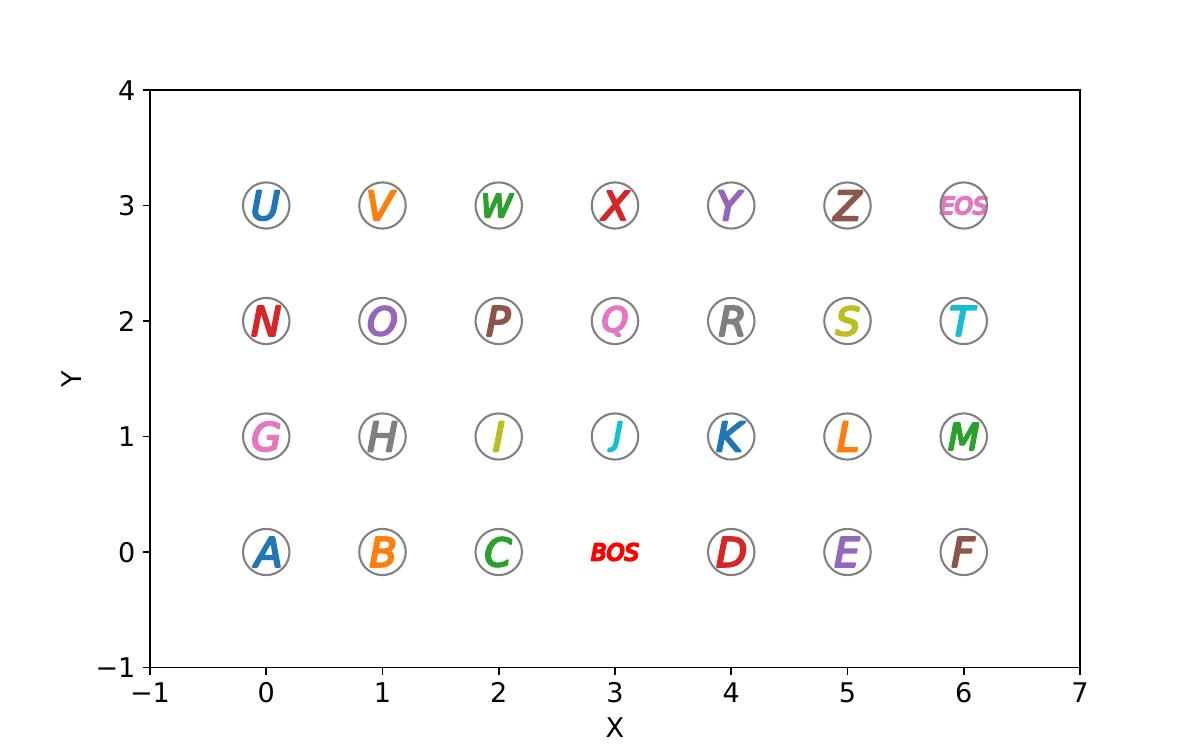}
  \end{minipage}
  \hfill
  \begin{minipage}{0.51\textwidth}
    \centering\vspace{3mm}
    \includegraphics[width=\linewidth]{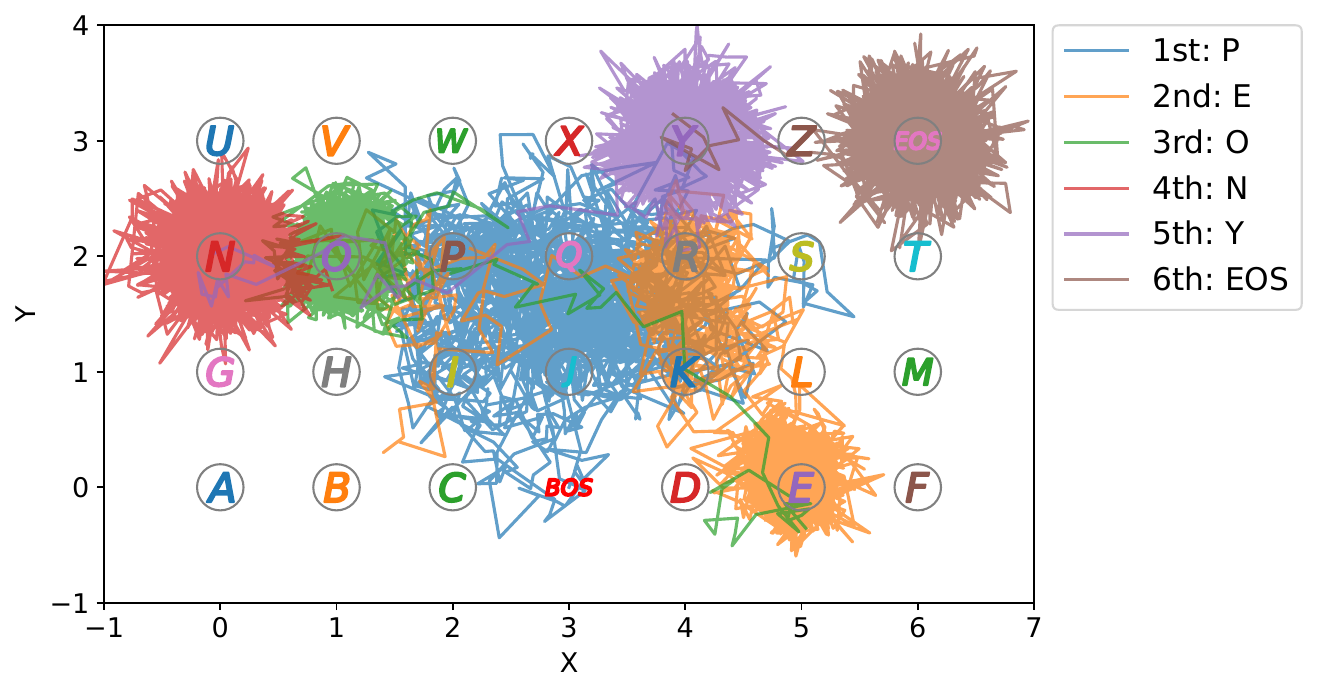}
  \end{minipage}
  \caption{\small Ouija board (left) and simulated player trajectories (right).}
  \label{fig:ouija_combined}
\end{figure}

\textbf{Environment:} We simulate a 2D environment ($d=2$) representing the Ouija board, spanning a $[-1,7] \times [-1,4]$ region. The set of discrete outputs $\mathcal{C}$ includes 26 alphabet letters along with special tokens 'BOS' (Beginning of Sentence) and 'EOS' (End of Sentence), totaling $28$ symbols. Each symbol $s \in \mathcal{C}$ is assigned a goal location $g_c$ arranged on a uniform integer grid, with positions ${0, 1, \dots, 6} \times {0, 1, 2, 3}$ in unit spacing. If the planchette's continuous position $x(t)$ moves outside the defined region, it is clipped back to the boundary, i.e., constrained to remain within the $[-1,7] \times [-1,4]$ space. The initial planchette position evolves from the 'BOS' position, located at $g_{\text{BOS}} = [3, 0]$. See Figure~\ref{fig:ouija_combined} for the layout.

\textbf{Agents and Language Models:} We primarily consider $N=2$ agents. Each agent $i$ is equipped with a simple internal language model $P_i(c_{\tau+1} \mid c_{1:\tau})$ representing their implicit knowledge or intention.
 For clarity and controlled experiments, we define these models based on simple n-gram models, rather than using large pre-trained LLMs directly. 
This allows us to create scenarios with agents having identical and different internal models, i.e., preference of flowers.
$P_i(c_{\tau+1} \mid c_{1:\tau})$ is derived from simple $6$-gram statistics from a small corpus assigned to agent $i$.

We investigate the generation of meaningful sequences, specifically targeting flower names.
To define the set of possible outputs, we first constructed a vocabulary of $100$ distinct flower names using ChatGPT-4o; these served as the reference set for our simulation.
To introduce differentiated preference distributions for Agent A and Agent B over this vocabulary, we generated $100{,}000$ samples for each agent via weighted random sampling.
The weights were determined based on the perceived colorfulness of each flower, as classified by ChatGPT-4o.
Agent~A was assigned higher sampling weights for flowers judged to be more visually colorful, resulting in a sampling bias toward vivid, vibrant flowers.
In contrast, Agent~B's weights were defined complementarily, using a reverse-weighting scheme proportional to $1 - w_A$ (normalized), thereby favoring less colorful or more subdued flower types.
This process enabled us to train distinct character-level $n$-gram language models ($n=6$) for each agent using their respective sampled flower names.

\textbf{Energy Function:} The individual energy $E_i(x \mid c_{1:\tau})$ for agent $i$ at position $x$, given the context $c_{1:\tau}$, is defined as the sum over all symbols $c \in \mathcal{C}$ of the product of the agent's probability for that symbol, $P_i(c \mid c_{1:\tau})$, and a potential $\phi(r)$. This potential $\phi(r)$ depends on the Euclidean distance $r = \|x - g_c\|$ between the current position $x$ and the goal location $g_c$ of symbol $c$ (see Eq.~\ref{eq:elemental_energy}).

The specific potential function $\phi_c(r)$ is defined using the Cauchy (Lorentzian) loss function, which smoothly interpolates between a quadratic potential near the origin and a logarithmic growth at larger distances. This form ensures continuity and smoothness of both the potential and its derivative (interpreted as force) across all $r$. This function is often used in robust estimation~\citep{huber2011robust}. The function is defined as follows, where $r = \sqrt{x^2 + y^2}$:
$$
\phi_\text{Cauchy}(r) = \frac{1}{2} \ln\left(1 + \left(\frac{r}{r_0}\right)^2\right)
$$
This formulation behaves like a spring potential for $r \ll r_0$ and suppresses long-range interactions for $r \gg r_0$.

Thus, the individual energy for agent $i$ and the fused energy $E_{\text{fused}}(x; c_{1:\tau})$ are as follows:
\[
E_i(x \mid c_{1:\tau}) = -\sum_{s \in \mathcal{C}} P_i(s \mid c_{1:\tau}) \phi_\text{Cauchy}(\|x-g_c\|), \quad E_{\text{fused}}(x \mid c_{1:\tau}) = \sum_{i=1}^N E_i(x \mid c_{1:\tau})
\]
In our experiments, the boundary radius is $r_0=0.3$.

\textbf{Dynamics Parameters:}
Key parameters for the CoCre-Sam dynamics are the step size $\eta$ and the individual noise intensities $D_i$, which determine the effective temperature $\mathcal{T}_{\rm fused} = \sum D_i / \eta$.
In our experiments, we set the step size to $\eta = 0.1$ and the individual noise intensities to $D_i = 0.01$.
For $N = 2$ agents, this yields an effective temperature of $\mathcal{T}_{\rm fused} = 0.2$.
The inner simulation loop for selecting each character runs for $T_{\text{max\_inner}} = 2000$ steps.
To determine the character $c_{\tau+1}$ at the end of these steps, only the final $5\%$ of the trajectory (i.e., the last 100 steps) is used to vote for the symbol $s \in \mathcal{C}$ whose goal location $g_c$ is closest to the planchette's position.
The most frequently selected symbol is adopted as $c_{\tau+1}$.
The outer loop continues this process to generate a sequence, terminating either upon selecting the EOS symbol or upon reaching a maximum length of $20$ characters.

\subsection{Result}
\textbf{Verification of Collective Sampling Dynamics:} We verify the CoCre-Sam dynamics by simulating the planchette's trajectory $x(t)$ on the fused energy landscape $E_{\text{fused}}(x \mid c_{1:\tau})$, starting from various initial positions $x(0)$ for a fixed context. 
We first verify that the system behaves according to the proposed theoretical model. Figure~\ref{fig:generation_sequence_lemon_vertical} provides a step-by-step visualization of the process, illustrating how the planchette stochastically navigates the evolving energy landscape to select each character of the word \texttt{PEONY}. The associated fused energy $E_{\text{fused}}(x(t)| c_{1:\tau})$ decreased on average over time, despite stochastic fluctuations, until the system settles into a local minimum, signifying the selection of a letter. This behavior is consistent with our theoretical analysis of the system as a collective Langevin sampling process.

\begin{figure}[tb]
  \centering
  \begin{tabular}{cc}
    \includegraphics[width=0.45\textwidth]{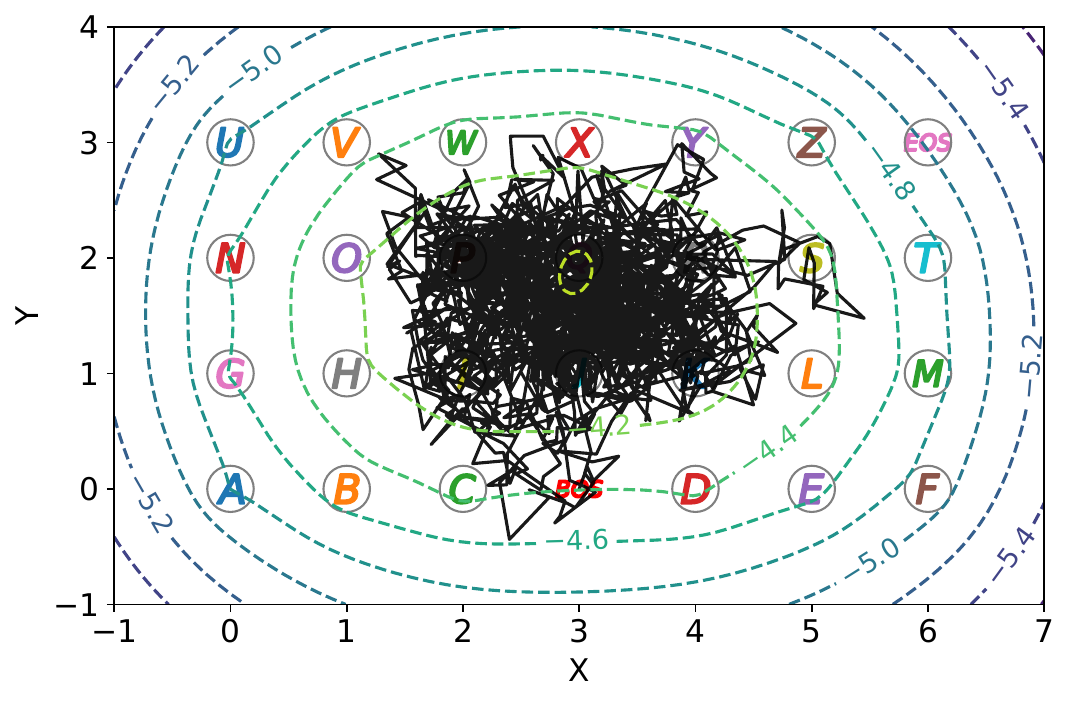} &
    \includegraphics[width=0.45\textwidth]{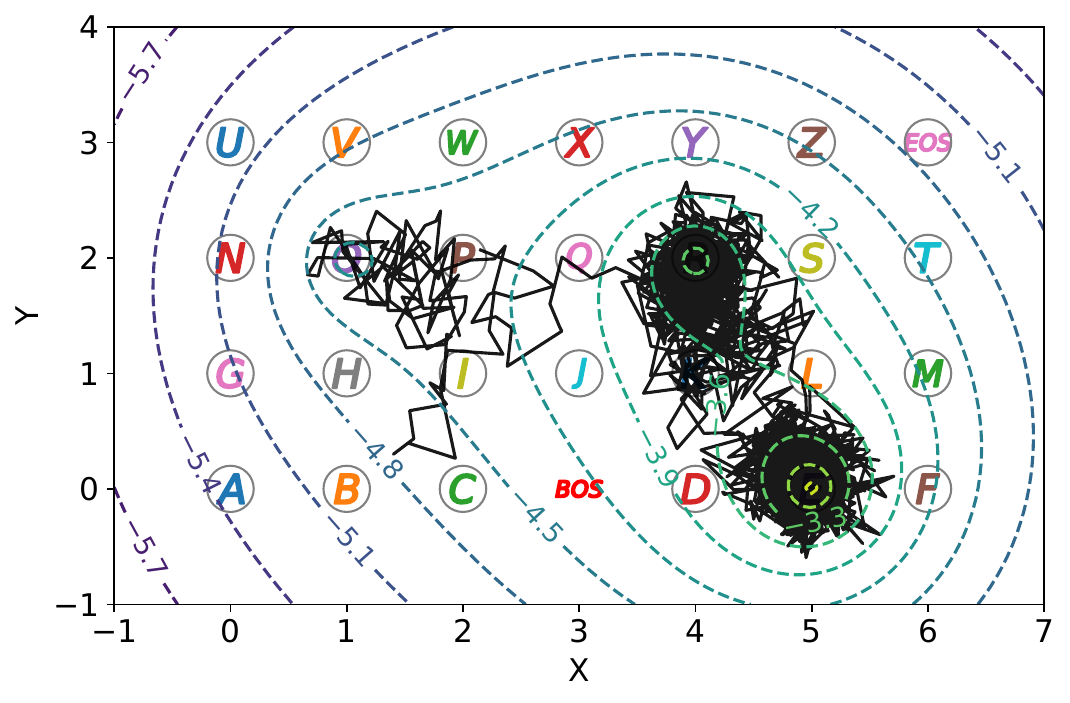} \\
    \includegraphics[width=0.45\textwidth]{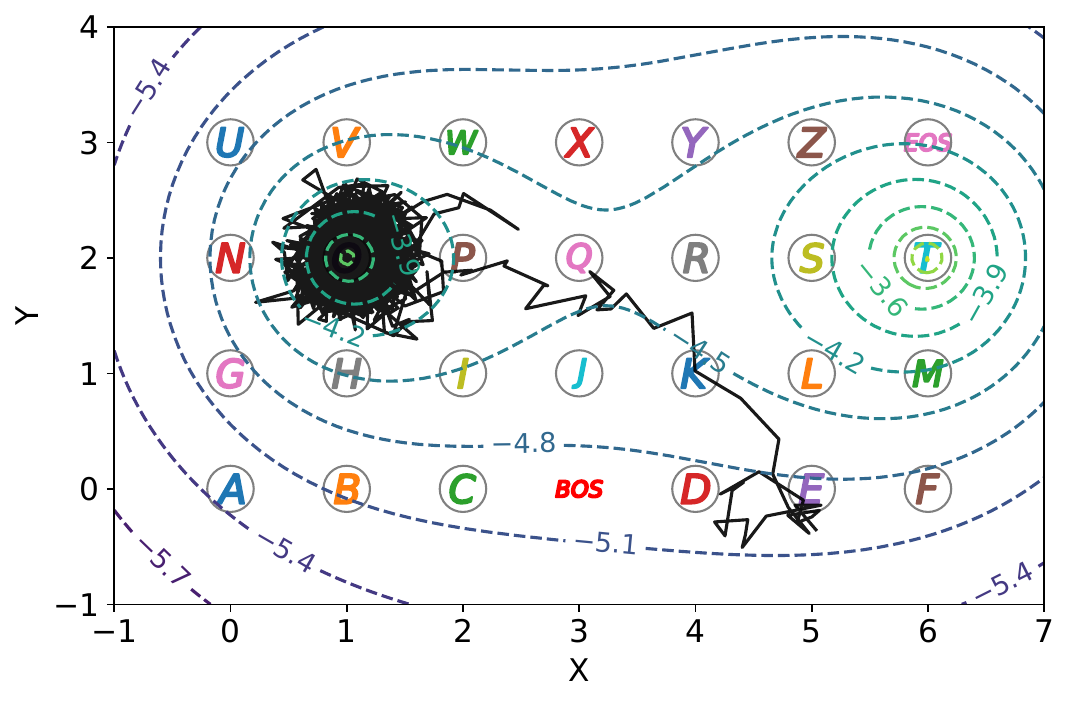} &
    \includegraphics[width=0.45\textwidth]{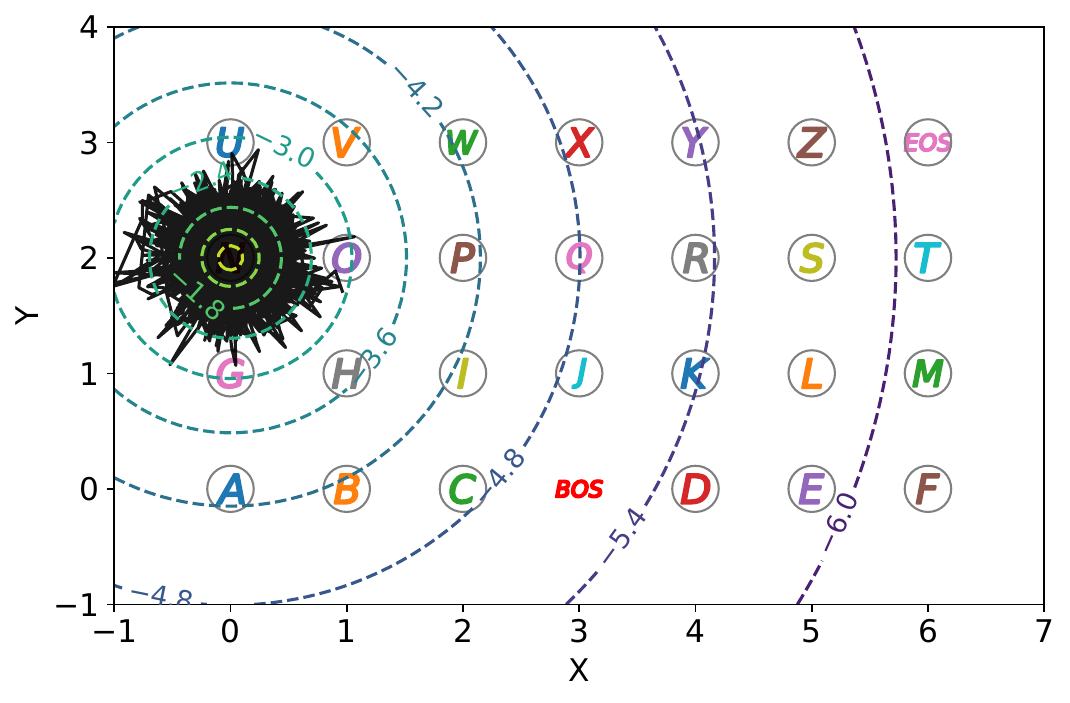} \\
    \includegraphics[width=0.45\textwidth]{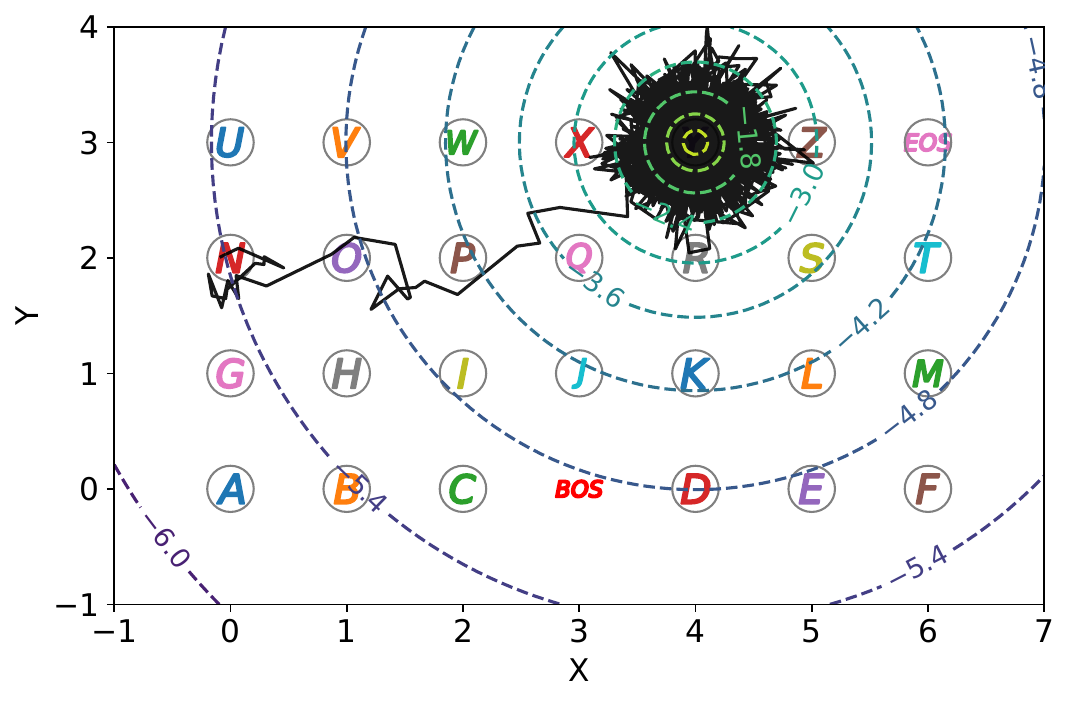} &
    \includegraphics[width=0.45\textwidth]{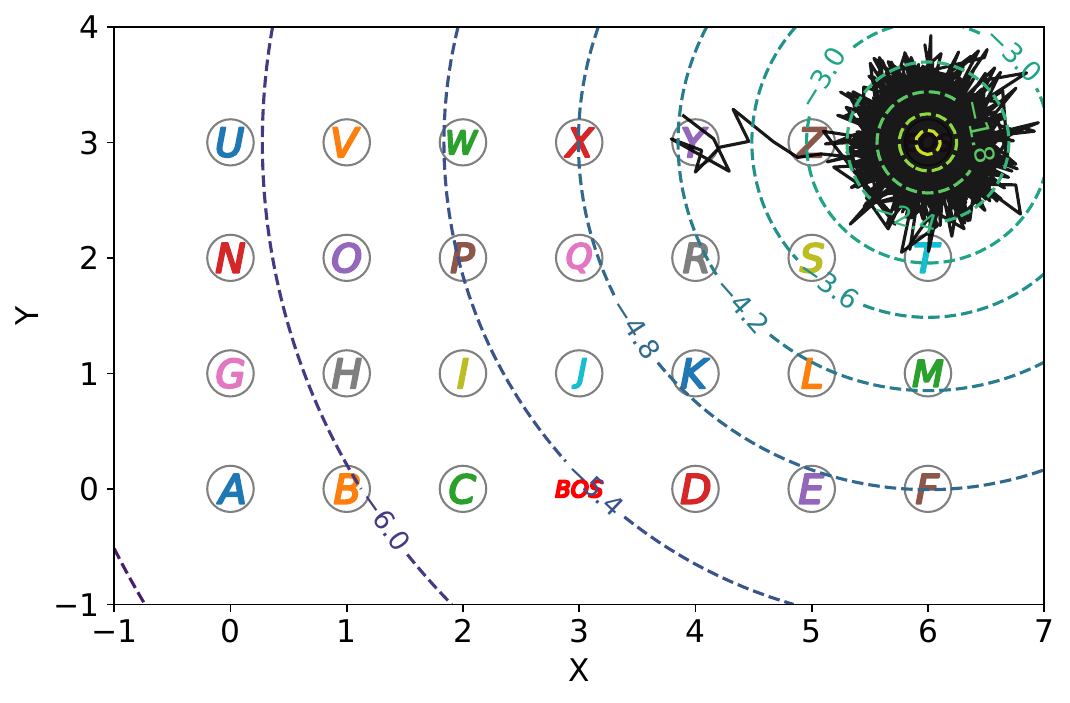} \\
  \end{tabular}
  \caption{Step-by-step visualization of the word generation process for the word \texttt{PEONY}, followed by the \texttt{[EOS]} token. The six panels, arranged top-to-bottom and left-to-right in a 3$\times$2 layout, show the evolution of the fused energy landscape $E_{\rm fused}$ and the planchette's trajectory as each new character is selected.}
  \label{fig:generation_sequence_lemon_vertical}
\end{figure}

\textbf{Emergent Generation of Meaningful Sequences:}

\begin{table}[bt]
 \centering
 \caption{Top 5 generated flower names, their frequencies, and likelihoods under various conditions. CoCre-Sam ($\propto \sqrt{P_1P_2}$) represents the original model, CoCre ($\propto P_1P_2$) a stronger fusion, 'No Noise' a deterministic search, and 'High Noise' a near-random process. Results are from 100 trials per condition.}
 \label{tab:top5_flower_results_en_expanded}
 
 \begin{tabular}{l|ccc|ccc|ccc}
  \toprule
   & \multicolumn{3}{c|}{\textbf{Agent 1 ($P_1$)}} & \multicolumn{3}{c|}{\textbf{Agent 2 ($P_2$)}} & \multicolumn{3}{c}{\textbf{CoCre-Sam ($\propto \sqrt{P_1P_2}$)}} \\
  \# & String & Freq & Prob. & String & Freq & Prob. & String & Freq & Prob. \\
  \midrule
  1 & petunia & 23 & 0.016 & quassia & 31 & 0.015 & quaker & 27 & 0.008 \\
  2 & jacobinia & 18 & 0.011 & quaker & 14 & 0.014 & jacobinia & 21 & 0.011 \\
  3 & quince & 13 & 0.008 & jacobinia & 13 & 0.009 & petunia & 15 & 0.009 \\
  4 & iris & 12  & 0.011 & primrose & 8 & 0.007 & iris & 10  & 0.011 \\
  5 & impatiens & 8  & 0.012 & kalmia & 7 & 0.009 & rockrose & 7  & 0.016 \\
  \bottomrule
 \end{tabular}
 
 \bigskip 
 
 \begin{tabular}{l|ccc|ccc|ccc}
  \toprule
   & \multicolumn{3}{c|}{\textbf{No Noise ($T_i=0.0$)}} & \multicolumn{3}{c|}{\textbf{Middle Noise ($T_i=0.5$)}} & \multicolumn{3}{c}{\textbf{High Noise ($T_i=1.0$)}} \\
   \# & String & Freq & Prob. & String & Freq & Prob. & String & Freq & Prob. \\
   \midrule
  1 & quince & 100 & 0.010 & xyris & 4 & 0.010 & EOS & 5 & 0.000\\
  2 & - & - & - & xenia & 3 & 0.010 & u & 2 & 0.000\\
  3 & - & - & - & dahlia & 3 & 0.008 & quaker & 1 & 0.008 \\
  4 & - & - & - & crocus & 3 & 0.010 & crocus & 1 & 0.010 \\
  5 & - & - & - & quaker & 3 & 0.008 & yucca & 1 & 0.008 \\
   \bottomrule
 \end{tabular}
\end{table}

\begin{table}[bt]
\centering
\caption{Perplexity of generated valid words. Each row indicates the generating model, while each column indicates the language model used for evaluation. A lower perplexity score signifies that the generated text is more predictable for that model. Low values on the diagonal suggest that each model generates outputs consistent with its own knowledge.}
\label{tab:perplexity}
\begin{tabular}{l|ccc}
\toprule
 & \multicolumn{3}{c}{\textbf{Perplexity evaluated by}} \\
\textbf{Generated by} & Agent 1 LM ($P_1$) & Agent 2 LM ($P_2$) & Fused LM ($\sqrt{\propto P_1P_2}$) \\
\midrule
Agent 1 & \textbf{1.849} & 2.248 & \underline{2.035} \\
Agent 2 & 2.130 & \textbf{1.840} & \underline{1.899} \\
CoCre-Sam & \underline{1.926} & 1.986 & \textbf{1.919} \\
\bottomrule
\end{tabular}
\end{table}

Having verified the basic dynamics, we investigate the primary claim: whether CoCre-Sam can emergently generate meaningful sequences. 

Table~\ref{tab:top5_flower_results_en_expanded} provides qualitative examples of generated sequences. It also shows likelihood of each generated words. This shows that the CoCre-Sam properly sample words from latent language models. When search was conducted using Gradient Descent (with $D_i = 0$), 'quince' was sampled in all trials.

Table~\ref{tab:top5_flower_results_en_expanded} presents the five most frequently generated flower names under various conditions. \textbf{Agent 1}, biased toward colorful flowers, frequently generated words such as \textit{petunia} and \textit{jacobinia}, while \textbf{Agent 2}, biased toward less colorful ones, favored \textit{quassia} and \textit{quaker}. The \textbf{CoCre-Sam} model effectively fuses these divergent preferences, generating outputs like \textit{quaker} and \textit{jacobinia}---words that are plausible for both agents---as well as agent-specific preferences. This indicates a successful and non-trivial fusion of linguistic knowledge. 

In contrast, under the \textbf{``No Noise''} condition (i.e., gradient descent), CoCre-Sam consistently collapsed to a single outcome (\textit{quince}), underscoring the essential role of stochasticity in enabling exploration. Conversely, the \textbf{``High Noise''} setting degraded performance, often resulting in non-words or short fragments.

To visualize this fusion on a semantic level, Figure~\ref{fig:kde_distributions} plots the density distribution of the ``colorfulness'' weights of the generated words. Agent 1's distribution is skewed towards higher weights (more colorful), while Agent 2's is skewed towards lower weights. The distribution for CoCre-Sam clearly lies between the two, providing visual evidence that the collective process averages the agents' preferences to generate words from a genuinely fused semantic space.

We further quantify the nature of this knowledge fusion using perplexity, as shown in Table~\ref{tab:perplexity}.** As expected, each agent's generations had the lowest perplexity when evaluated by its own language model (e.g., $1.849$ for Agent 1 evaluated by $P_1$), confirming they acted according to their internal knowledge. Crucially, sequences generated by CoCre-Sam not only achieve the lowest perplexity against the Fused LM ($1.919$) but also maintain low and balanced perplexity scores when judged by the individual LMs of both Agent 1 ($1.926$) and Agent 2 ($1.986$). This quantitatively demonstrates that the collective output is coherent and comprehensible from both agents' perspectives, confirming a successful integration of their models.

Finally, we performed an ablation study on the noise intensity (i.e., temperature) to investigate its role in the system's performance, with results shown in Figure~\ref{fig:ablation_noise}. The results reveal a clear trade-off between reliability and diversity. As the temperature decreases, valid words are generated more reliably, but at the cost of reduced diversity. Conversely, as the temperature increases, the variety of generated words grows, but the number of valid words diminishes.

\begin{figure}[tb]
  \centering

  \begin{minipage}[b]{0.62\textwidth}
    \centering
    \includegraphics[width=\textwidth]{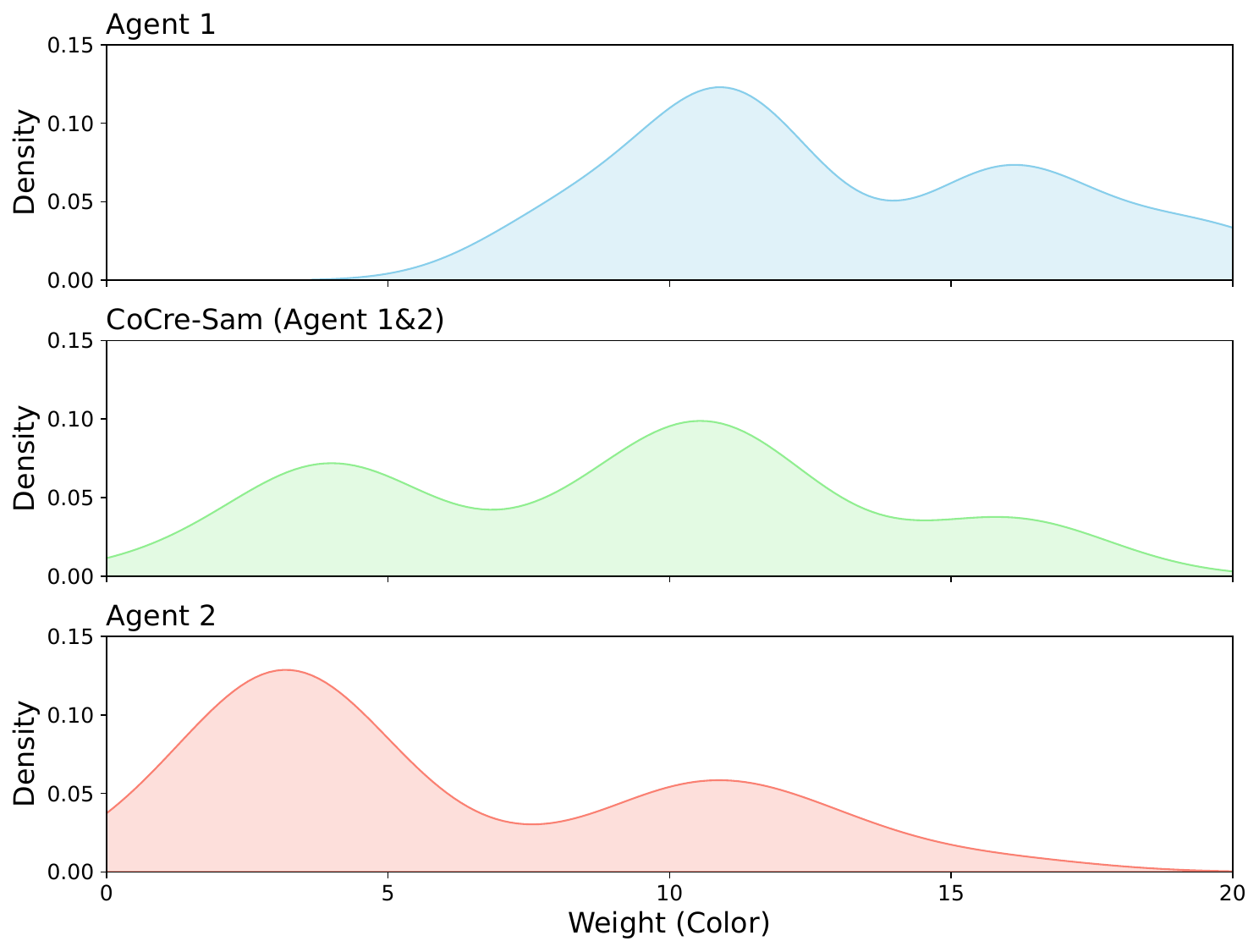}
    \caption{Density distributions of weight of words generate by each agent.}
    \label{fig:kde_distributions}
  \end{minipage}
  \hfill 
  \begin{minipage}[b]{0.35\textwidth}
    \centering
    \includegraphics[width=\textwidth]{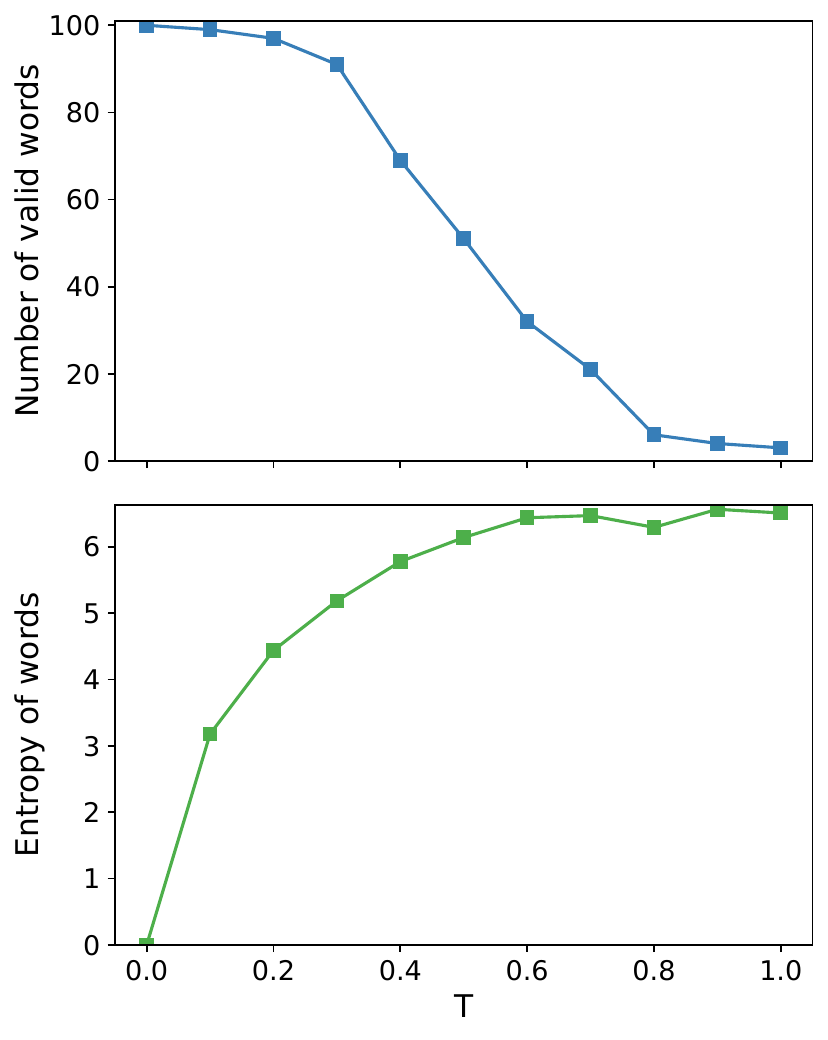}
    \caption{Ablation study on noise intensity ($\mathcal{T}_{\rm fused}$). The top panel shows the number of valid words generated, and the bottom panel shows the entropy of the word distribution.}
    \label{fig:ablation_noise}
  \end{minipage}
\end{figure}

\section{Discussion}
\label{sec:discussion}
The CoCre-Sam framework proposed in this work extends beyond a computational model for the specific cultural phenomena of Kokkuri-san and the Ouija board. It aims to illuminate a more universal question: How can the implicit, decentralized knowledge of individual agents be integrated through embodied physical interaction to generate coherent, structured outputs that transcend any single participant's intention?. This section discusses the broader implications of CoCre-Sam.

\textbf{Emergent Agency and Anthropomorphism:}

The behavior of the Ouija board has traditionally been explained through mechanisms such as the ideomotor effect~\citep{Carpenter1852,Gauchou2012,Andersen2019}, which accounts for unconscious motor movements. However, these accounts do not explain why seemingly coherent linguistic sequences emerge. CoCre-Sam provides a precise computational explanation for this phenomenon, showing that structured outputs can arise from decentralized sampling over a fused distribution of internal language models.

This also offers novel insights into the study of \textit{agency}. Conventional models of agency suggest that the sense of agency (SoA), the feeling that one is the origin of an action, arises from the predictability between intention and outcome (see \citet{yano2020statistical}). Recent hierarchical perspectives argue that SoA involves multiple layers, especially in disorders like schizophrenia, where impairments in motor control, control detection, and self-attribution dissociate~\citep{oi2024hierarchical}.

These studies primarily focus on low-level sensorimotor control. In contrast, the SoA in the Ouija board or Kokkuri-san setting emerges around a fictional \textit{collective creature}, leading participants to claim, ``it was Kokkuri-san who moved us.'' This suggests a qualitatively different phenomenon: the emergence of a perceived third-party agent beyond the motor level.

We speculate that such experiences are conceptually linked to how people attribute personality or intentionality to LLMs like ChatGPT. In both cases, structured linguistic output, generated via internalized probabilistic models, leads to anthropomorphic interpretations. According to the CoCre-Sam framework, the output trajectory is not traceable to any individual agent, but rather to a fused linguistic prior. The sense of agency thus emerges not from personal motor control, but from the coherence and meaning perceived in the shared output, a fundamentally different kind of agency that invites anthropomorphism.

\textbf{Connection to Collective Predictive Coding and Collective Intelligence:}

The process modeled by CoCre-Sam resonates with the Collective Predictive Coding (CPC)~\citep{Taniguchi2024CPC}, which posits that symbol systems such as language emerge as shared priors through decentralized interactions among agents. In CPC, language is not merely an individual capacity but a collectively maintained structure that informs and constrains perception and action. Individual agents operate with local sensory and cognitive states, but symbolic interaction enables them to project these states into a common space of interpretation.

Although the modeling assumptions differ, CoCre-Sam may be viewed as a concrete instantiation of this idea. In our framework, each agent contributes a stochastic force derived from an internal language model, and the collective planchette trajectory samples from a fused energy landscape that effectively represents a shared symbolic prior. The emergent character sequences can thus be interpreted as symbolic outputs not traceable to any single agent, but arising from decentralized integration of latent linguistic knowledge.

This mechanism aligns with broader notions of collective intelligence, where distributed, noisy, and partial knowledge sources integrate to generate coherent outcomes~\citep{Surowiecki2004}. CoCre-Sam shows that such integration can be driven not only by message passing or explicit coordination, but also through embodied physical interaction.

\textbf{Parallels with Shared Control:}
Conceptually, CoCre-Sam shares structural similarities with shared control frameworks in human-robot interaction~\cite{dragan2013policy,kirby2007shared,kim2012blended,akella2010policy,li2017shared,muraoka2024}. In both cases, multiple noisy or partial signals, whether from humans or models, are blended to produce a unified control signal (see Eq.~\ref{eq:basic_dynamics}). The dynamics of integrating micro-forces into a coherent trajectory mirror the logic of shared autonomy and co-creative task planning. These insights suggest that collective Langevin dynamics could inform novel approaches to distributed decision-making in mixed human-AI teams.

\section{Conclusion}
\label{sec:conclusion}

This paper introduced CoCre-Sam, a computational framework modeling the emergent collective dynamics observed in Ouija board or Kokkuri-san interactions. By conceptualizing these dynamics as Langevin sampling from a fused energy landscape, CoCre-Sam captures how multiple agents, each equipped with an internal language model, can collectively generate structured symbolic sequences through stochastic physical interaction. Our theoretical analysis demonstrated that the resulting planchette dynamics correspond to Langevin MCMC over a fused energy function, which approximates a tempered PoE distribution of the agents' linguistic priors.

CoCre-Sam thus provides a unified computational mechanism for understanding how meaningful phrases can emerge from decentralized, implicit interaction. Beyond explaining a culturally significant phenomenon, the framework contributes to ongoing discussions in collective intelligence, CPC, shared autonomy, and anthropomorphic perception.

While the current study focuses on mathematical modeling and simulated experiments, CoCre-Sam opens new possibilities for hypothesis generation in empirical research. Future studies could involve human participants to examine whether physical behaviors in actual Ouija-like interactions follow the predicted Langevin dynamics and how noise characteristics (e.g., motor variability) influence convergence. Such experiments would enable validation of the CoCre-Sam framework in real-world embodied symbolic settings and help assess its utility both as a simulation platform and as a tool for understanding emergent interaction between humans and intelligent systems.


\section*{Data accessibility}
The source code supporting this publication are available at Zenodo: \url{https://doi.org/10.5281/zenodo.15876240}.  
The corresponding GitHub repository is also available at: \url{https://github.com/Tanichu-Laboratory/CoCre-Sam}.

\begin{ack}
This work was supported by JSPS KAKENHI Grant Numbers JP23H04835 and JP21H04904. 
\end{ack}

\medskip
{\small
\bibliographystyle{abbrvnat}  
\bibliography{cocre}    
}


\end{document}